\documentclass[sigplan,10pt,nonacm,balance=false]{acmart}
\setcopyright{none}
\renewcommand\footnotetextcopyrightpermission[1]{}
\usepackage{amsmath,booktabs,tabularx,multirow}
\ifPDFTeX\else\usepackage{amssymb}\fi
\usepackage{tikz}
\usetikzlibrary{arrows.meta,positioning,fit,calc,backgrounds}
\usepackage{graphicx,microtype,placeins,needspace,pifont}
\newcommand{\latchcheckmark}{\text{\ding{51}}}
\newcommand{\sys}{TLSLatch}
\newcommand{\batch}{TLSLatch-Batch}
\newcommand{\Hash}{\mathsf{H}}

\newcommand{\dc}{\ensuremath{\latchcheckmark}}
\newcommand{\op}{\ensuremath{\circ}}

\definecolor{navy}{HTML}{234B71}
\definecolor{teal}{HTML}{176B62}
\definecolor{brick}{HTML}{A44335}
\definecolor{pale}{HTML}{F1F4F6}
\tikzset{>=Latex,every node/.style={font=\normalfont\normalsize},endpoint/.style={draw=navy,rounded corners=2pt,fill=pale,align=center,inner sep=5pt},boundary/.style={draw=teal,thick,dashed,rounded corners=3pt,inner sep=9pt},tls/.style={->,thick,navy},evidence/.style={->,dashed,teal,thick}}
\theoremstyle{definition}
\newtheorem{definition}{Definition}
\newtheorem{observation}{Observation}
\newtheorem{lemma}{Lemma}
\newtheorem{proposition}{Proposition}

\AtBeginDocument{\raggedbottom}

\title[Session Attestation for Unmodified TLS Services]{Session Attestation for Unmodified TLS Services in Confidential Virtual Machines}
\author{Qi Gu}
\affiliation{\institution{NSFOCUS Technologies Group Company Ltd}\country{China}}
\email{guqi@nsfocus.com}

\author{Wenmao Liu}
\affiliation{\institution{NSFOCUS Technologies Group Company Ltd}\country{China}}
\email{liuwenmao@nsfocus.com}

\author{Weijing You}
\affiliation{\institution{Fujian Normal University}\country{China}}
\email{youweijing@fjnu.edu.cn}

\author{Yifei Chen}
\affiliation{\institution{Hefei University of Technology}\country{China}}
\email{yifeichen@mail.hfut.edu.cn}

\author{Sheng Ma}
\affiliation{\institution{Independent Researcher}\country{China}}
\email{tianma666999@gmail.com}

\author{Fozhong Chen}
\affiliation{\institution{NSFOCUS Technologies Group Company Ltd}\country{China}}
\email{chenfozhong@gmail.com}

\begin{document}
\begin{abstract}
Confidential cloud services aim to protect sensitive requests from the infrastructure that executes them. However, running a service inside a trusted execution environment does not ensure that users’ plaintext appears only within the protected environment. We formulate Endpoint-Substitution Relay (ESR), a common attack outcome in which an adversary receives plaintext at a client-accepted endpoint while relaying requests to the legitimate service to preserve correct behavior. We present TLSLatch, a transparent session-attestation mechanism for services running in confidential virtual machines. TLSLatch attests the protected origin of the server’s ephemeral TLS 1.3 key share and gates outbound traffic until verification succeeds. It requires no changes to applications, TLS libraries, certificates, or application protocols, and adds no extra payload-encryption layer. We implement TLSLatch with a hardware-backed Hygon CSV CVM server and clients on Linux, Windows, and macOS. Across these platforms, TLSLatch reduces completion time for 1~KB requests by 56.9\%--65.5\% compared with nested TNG, and for 64~MB requests by 43.8\%--82.4\% compared with CMaaS using application-key reuse. These results show that transparent session attestation can preserve existing TLS stacks while adding low-overhead endpoint binding to confidential cloud services.
\end{abstract}
\maketitle
% !TeX root = ../main.tex
\section{Introduction}
\label{sec:intro}
Confidential computing protects data in use from privileged infrastructure. Confidential virtual machines (CVMs) provide this protection by isolating the guest OS and its applications from the untrusted host~\cite{amdsevsnp,inteltdx}. This model enables cloud services to process sensitive inputs while reducing trust in the underlying infrastructure. It is increasingly relevant to private cloud compute (PCC), including confidential AI inference services such as Alibaba Cloud’s Confidential MaaS (CMaaS)\cite{aliyuncmaas}. Such services require not only protected execution, but also assurance that sensitive data reaches the intended protected environment.

Moving a service into a CVM protects its execution, but does not by itself determine where a client’s TLS connection terminates. Requests may still traverse gateways, load balancers, and other network infrastructure before reaching the application. TLS authenticates the service identity and protects the connection~\cite{rfc8446}, while remote attestation (RA) provides evidence about the execution environment~\cite{rfc9334}. A client of a confidential service must connect these checks: the environment it approves must also be the environment that receives its plaintext. A valid TLS identity and valid attestation evidence are insufficient if they authenticate different endpoints.

Operational credentials and routing controls can be compromised independently of the production CVM. An adversary with such access may redirect a client to another endpoint while the genuine service continues to operate correctly. If that endpoint can read the request and relay it to the genuine service, the client may still receive the expected result even though confidentiality has already been lost. Prior work has shown that compromising reusable credentials can enable relay or substitution attacks against some TLS-attestation constructions~\cite{weinhold2025tlsra,sardar2026identity}. Security therefore depends not only on performing both checks, but on binding them to the same receiving endpoint.

We study this problem through \emph{Endpoint-Substitution Relay} (ESR), a common attack goal in which the client accepts a receiver whose plaintext is readable to the adversary, while the genuine service is used to preserve the expected behavior. ESR exposes the resources on which different channel-authentication constructions depend. The analysis develops two sufficient constructions for obtaining such a receiver and characterizes the service-identity, protected-environment, and receiving-state resources they require, together with the reuse scope of compromised resources. This framework provides a common basis for comparing TLS--attestation designs under different compromise conditions.

The analysis motivates a design that binds attestation to receiving state specific to the current connection while retaining independent service identity authentication. Existing approaches that provide comparable session binding or endpoint protection may require changes to the TLS or application path, or retain an additional payload-encryption layer~\cite{weinhold2025tlsra,tngsoftware,cmaassdk}. We present \sys{}, which adds session attestation to existing TLS applications without changing applications, TLS libraries, certificates, or application protocols. \sys{} attests the protected origin of the server's current TLS key share and temporarily gates outbound traffic until verification succeeds, after which communication returns to the native TLS data path.

We make the following contributions:
\begin{enumerate}
\item \textbf{ESR analysis.} We formulate ESR as a common attack goal and develop two sufficient constructions for realizing it. By decomposing their requirements into service-identity, protected-environment, and receiving-state resources together with their reuse scope, the framework exposes the security dependencies of TLS-attestation designs.

\item \textbf{Transparent session authentication.} We design \sys{} to attest the protected origin of the server's current TLS key share while retaining independent service identity authentication. A temporary client-side gate delays sensitive traffic until verification succeeds, after which communication returns to the native TLS data path without an additional payload-encryption layer.

\item \textbf{Cross-platform realization and extensions.} We implement TLSLatch on a hardware-backed Hygon CSV CVM server, with clients on Linux, Windows, and macOS. The design further supports mutual attestation between CVMs and \batch{}, a delayed-confirmation mode that amortizes attestation across connections. We release the implementation as open source to support reproduction and further research.

\item \textbf{Detailed evaluation.} We evaluate \sys{} on a hardware-backed Hygon CSV CVM with three client platforms against native TLS and four attested-channel configurations, measuring authentication latency, transfer overhead, resource costs, mutual attestation, and batch confirmation.

%\item \textbf{Reusable implementation.} We release our implementation as open source to support reuse and further research.
\end{enumerate}

% !TeX root = ../main.tex
\section{Background}
\label{sec:background}
\subsection{TEEs and Remote Attestation}
Trusted execution environments (TEEs) use hardware-enforced isolation to protect code and data from privileged software outside the protected boundary. Process-oriented TEEs isolate selected application components, whereas confidential VMs (CVMs) protect an entire guest OS and its applications. This VM-level abstraction preserves conventional OS and application interfaces, allowing existing workloads to move into confidential execution without enclave-style application partitioning. AMD SEV-SNP, Intel TDX, Arm CCA, and Hygon CSV provide hardware support for this model~\cite{amdsevsnp,inteltdx,armcca,hygoncsv}.

%Remote attestation (RA) allows a verifier to assess a protected environment from signed evidence rooted in platform trust. Evidence may bind a fresh challenge to claims such as software measurements, platform identity, security version, and configuration; the verifier validates the evidence and evaluates these claims against reference values and an appraisal policy~\cite{rfc9334}. Attestation, however, can establish only properties captured by its evidence. An initial software measurement, for example, does not guarantee continued runtime integrity, motivating runtime measurement and code-integrity mechanisms~\cite{verismo2024}. Such mechanisms strengthen assurance about the protected environment itself, but do not by themselves establish that client data is delivered to that environment.

Remote attestation (RA) allows a verifier to assess claims about a protected environment from signed evidence rooted in platform trust. Evidence may bind a fresh challenge to software measurements, platform identity, security version, and configuration, which the verifier evaluates against an appraisal policy~\cite{rfc9334}. RA establishes the properties of the environment represented in that evidence; channel authentication must additionally bind those properties to the endpoint receiving client data.

%In this paper, We use \emph{RA} for evidence generation and appraisal, and \emph{channel authentication with attestation} for protocols that apply this evidence to a communicating endpoint.

\subsection{TLS 1.3}
\label{sec:tls-background}
In certificate-authenticated TLS 1.3, the server proves possession of the private key corresponding to its certificate through CertificateVerify, whose signature covers the handshake transcript, including the negotiated key shares. This reusable authentication credential establishes service identity and is separate from the ephemeral state of an individual session.

In a full ECDHE handshake, ClientHello and ServerHello carry the client and server key shares, written as $X=g^x$ and $Y=g^y$, from which both endpoints derive the shared secret $g^{xy}$ and, together with the handshake transcript, the handshake and traffic keys. Finished provides key confirmation for the resulting handshake state~\cite{rfc8446}. The public shares $X$ and $Y$ are visible on the wire without revealing $x$, $y$, or $g^{xy}$. Consequently, compromise of the reusable certificate key can enable service impersonation on a new connection, but does not by itself reveal the traffic keys of an honest session with fresh ECDHE state.

\begin{figure*}[t]
\centering% Native TikZ; sized for two columns without font scaling.
\begin{minipage}[t]{.49\textwidth}\centering
\textbf{(a) Direct / TLS passthrough}\par\vspace{8pt}
\begin{tikzpicture}
\node[endpoint,minimum width=1.15cm,minimum height=.85cm] (c) at (.65,0) {Client};
\node[endpoint,draw=black!45,dashed,fill=white,minimum width=1.8cm,minimum height=.85cm] (g) at (3.75,0) {Optional\\forwarder};
\node[endpoint,minimum width=1.4cm,minimum height=.85cm] (s) at (6.85,0) {Service};
\node[boundary,fit=(s),inner sep=6pt,label=above:{CVM}] {};
\draw[->,black!50] (c.east)--(g.west);
\draw[->,black!50] (g.east)--(s.west);
\draw[tls] (c.south)--++(0,-.6)-| node[pos=.25,below]{TLS to CVM} (s.south);
\draw[evidence] ($(s.south)+(.22,0)$)--++(0,-1.4)-| node[pos=.25,below]{Attestation evidence} ($(c.south)+(-.22,0)$);
\end{tikzpicture}
\end{minipage}\hfill
\begin{minipage}[t]{.49\textwidth}\centering
\textbf{(b) TLS-terminating gateway}\par\vspace{8pt}
\begin{tikzpicture}
\node[endpoint,minimum width=1.15cm,minimum height=.85cm] (c) at (.65,0) {Client};
\node[endpoint,minimum width=1.8cm,minimum height=.85cm] (g) at (3.75,0) {Gateway\\/ LB};
\node[endpoint,minimum width=1.4cm,minimum height=.85cm] (s) at (6.85,0) {Service};
\node[boundary,fit=(s),inner sep=6pt,label=above:{CVM}] {};
\draw[tls] (c.east)--node[above]{TLS}(g.west);
\draw[->,black!50] (g.east)--node[above]{Routing}(s.west);
\draw[tls] (c.south)--++(0,-.6)-| node[pos=.25,below]{Inner ciphertext to CVM} (s.south);
\draw[evidence] ($(s.south)+(.22,0)$)--++(0,-1.4)-| node[pos=.25,below]{Attestation evidence} ($(c.south)+(-.22,0)$);
\end{tikzpicture}
\end{minipage}
\caption{The protected worker remains the request-decryption endpoint. (a) TLS terminates inside the CVM. (b) A gateway terminates outer TLS, while an inner encrypted payload reaches the CVM.}
\Description{Two deployment models compare direct or TLS-passthrough access with a TLS-terminating gateway. Dashed boundaries mark the CVM; dashed return arrows carry attestation evidence.}
\label{fig:cloud-deployments}
\end{figure*}

\subsection{Confidential Cloud Services and Deployment Models}
\label{sec:cloud-background}
Confidential cloud services aim to protect sensitive request processing through different trusted-computing architectures. Confidential AI inference is a prominent example: Alibaba Cloud's CMaaS provides confidential inference in a CVM setting and publishes its client SDK~\cite{aliyuncmaas,cmaassdk}. Related services pursue similar privacy goals with different architectures~\cite{applepcc,googleprivate,metaprivate}. Apple PCC, for example, relies on dedicated Apple silicon, secure and measured boot, hardware attestation, and software transparency rather than tenant CVMs~\cite{applehardwaretrust}. Despite these architectural differences, sensitive requests must ultimately reach an authenticated protected receiver. Clients therefore need platform-backed evidence about the intended receiving environment in addition to conventional service authentication.

Figure~\ref{fig:cloud-deployments} shows two deployment structures. In a direct or TLS-passthrough deployment, the client’s TLS connection terminates in the protected worker; intermediate forwarders do not decrypt it. In a TLS-terminating gateway deployment, the gateway terminates the outer TLS connection. To keep application plaintext hidden from the gateway, the request is carried in an inner encrypted channel that terminates in the protected worker. CMaaS follows this pattern through application-key establishment and payload encryption~\cite{cmaassdk}. We model the former as a single protected endpoint and the latter as layered outer and inner endpoints.

%TLS termination at a gateway can preserve request confidentiality when the request has another protection layer whose decryption endpoint remains inside the protected worker. Before sending sensitive data, the client authenticates that endpoint's key material and establishes an application-level encryption key, or encapsulates a request key to it. The gateway removes the outer TLS layer but still handles application ciphertext. CMaaS follows this pattern through application key establishment and payload encryption~\cite{cmaassdk}. The two encryption layers serve different endpoints. Since a transparent gateway can be inserted without changing the endpoint authentication problem, we use the direct model below and return to gateway termination where it changes the credentials or encryption layers an attack must overcome.

% !TeX root = ../main.tex
\section{Endpoint-Substitution Relays and Compromise Robustness}
\label{sec:esr}

An \emph{Endpoint-Substitution Relay} (ESR) occurs when a client releases a secret request to an accepted receiver whose plaintext is available to the adversary, while the genuine protected service is used to preserve the expected computation. ESR abstracts this outcome across attested-channel protocols by characterizing the resources sufficient to realize it under a fixed protocol, deployment, and verification policy.

Two alternative sufficient constructions realize ESR under different capabilities. The \emph{attested-environment path} uses a live protected environment accepted by the client's policy, while the \emph{receiving-state copying path} reproduces an attestation-endorsed receiving capability elsewhere. This decomposition exposes the service-identity, protected-environment, and receiving-state resources on which different designs depend, together with the scope over which acquired state remains usable. Prior work identifies concrete relay and diversion failures in attested TLS constructions~\cite{weinhold2025tlsra,sardar2026identity}; ESR places these cases under a common outcome and makes their resource dependencies explicit.

Table~\ref{tab:notation} summarizes the notation used throughout the analysis.
% !TeX root = ../main.tex
\begin{table}[t]
	\caption{Notation used in the ESR analysis.}
	\label{tab:notation}
	\centering
	\setlength{\tabcolsep}{4pt}
	\renewcommand{\arraystretch}{1.05}
	\begin{tabularx}{\columnwidth}{@{}>{\centering\arraybackslash}p{1.45cm}>{\raggedright\arraybackslash}X@{}}
		\toprule
		Symbol & Meaning \\
		\midrule
		$\Pi,\pi$ & Fixed configuration and target context. \\
		$\sigma_\Pi,B_\Pi$ & Protected receiving state used by the copying path; acquisition of that state. \\
		$pk_b,sk_b$ & Service-authentication public/private key in the shared-key example. \\
 $\begin{gathered}X=g^x\\Y=g^y\end{gathered}$ & Client/server ephemeral TLS key shares. \\
		$T_0$ & Freely generated self-signed identity. \\
		$T_E,T_S$ & Enterprise / service-restricted authentication capability. \\
		$R_0$ & Ordinary equipment with no attestation qualification. \\
		$R_V,R_P$ & Vendor-accepted protected environment / added provider authorization. \\
		\bottomrule
	\end{tabularx}
\end{table}

\subsection{Participants and Adversarial Capabilities}
\label{sec:threat}

The participants are an honest client $C$, a production service $S$, and an adversary $A$. The adversary can redirect the client’s traffic to a substitute endpoint and invoke $S$ through its ordinary interfaces. These are baseline capabilities of the ESR model; the service credentials, protected environments, and receiving state available to $A$ are modeled separately below. These capabilities may arise from privileged infrastructure access or an external compromise, without granting access to the production protected environment or its secrets.

%Our analysis is capability-based. For each construction, we state which service credentials, protected environments, or receiving state are available to $A$; acquiring one resource does not imply access to unrelated resources. Unless explicitly stated otherwise, the production service and client verification execute correctly, and attestation evidence accepted by the client's policy cannot be forged. A configuration $\Pi$ fixes the protocol, deployment, and client verification policy under analysis.

The analysis is capability-based. Each construction specifies the service credentials, protected environments, and receiving state available to $A$. These resources are modeled independently unless a construction explicitly links them. The production service and client verification execute correctly, and attestation evidence accepted by the client’s policy is unforgeable. A configuration $\Pi$ fixes the protocol, deployment, and verification policy under analysis.

\subsection{ESR Goal and Resource Model}
\label{sec:esr-definition}

To characterize ESR independently of a particular protocol, we first isolate the accepted-request disclosure event that every construction must realize.

\begin{definition}[Accepted-request disclosure]
	\label{def:esr}
	For configuration $\Pi$, target context $\pi$, and secret request $p$,
	\[
	\mathsf{Win}_{\Pi}
	= \mathsf{Accept}_{C}(\pi)
	\land \mathsf{Send}_{C}(p)
	\land \mathsf{Recover}_{A}(p).
	\]
	$\mathsf{Accept}_{C}(\pi)$ means that the receiver satisfies all identity, environment, and binding checks required by $\Pi$ for the target context. $\mathsf{Recover}_{A}(p)$ means that the adversary
	obtains the plaintext request $p$ outside the service's authorized
	output; if multiple protection layers cover $p$, recovery requires
	defeating all of them.
\end{definition}

An ESR is an accepted-request disclosure in which the accepted receiver
is a substitute endpoint and the recovered request is relayed to $S$ to
preserve the expected behavior. Leakage from an already established
honest session is a separate confidentiality failure rather than endpoint
substitution.

The two constructions are characterized by three resource classes. $T_\Pi$ denotes the service-authentication capability required to satisfy the identity checks of $\Pi$; for certificate-authenticated TLS, this includes the corresponding private signing capability. $R_\Pi$ denotes access to the live protected environment required by the attestation policy. $B_\Pi$ denotes acquisition of the protected receiving state $\sigma_\Pi$ defined in Section~\ref{sec:copying-path}.

We distinguish three identity cases. $T_0$ denotes a freely generated self-signed identity when no independently provisioned service identity is required; a specifically pinned certificate is therefore not $T_0$. $T_E$ denotes an enterprise credential accepted for the target service, while $T_S$ additionally restricts that credential to a specific service, workload, or instance.

For environment resources, $R_0$ denotes ordinary equipment with no attestation qualification. $R_V$ denotes access to a protected environment whose evidence is accepted through the platform vendor's attestation trust chain. $R_P$ adds deployment-specific authorization by the service or cloud provider~\cite{intelquote,applehardwaretrust}. These labels capture distinct policy requirements rather than a security ordering; whether provider authorization creates an independent acquisition barrier depends on how it is provisioned.

\begin{figure*}[t]
	\centering
	\begin{minipage}[t]{.49\textwidth}\centering
\textbf{(a) Attested-environment path}\par\vspace{4pt}
\begin{tikzpicture}[msg/.style={fill=white,inner sep=1pt,align=center},event/.style={draw=black!35,fill=white,rounded corners=2pt,inner sep=3pt,align=center}]
\node[endpoint] at (.6,0) {$C$};\node[endpoint,draw=brick] at (4,0) {$A$};\node[endpoint,draw=teal,dashed] at (7.4,0) {$S$};
\foreach \x in {.6,4,7.4}{\draw[black!30,dashed] (\x,-.35)--(\x,-6.75);}
\draw[tls] (.6,-.8)--node[msg,above]{1. $X,n$}(4,-.8);
\node[event] at (4,-2.1){2. Use $T_\Pi$ and live $R_\Pi$\\Evidence / protected operations};
\draw[tls] (4,-3.5)--node[msg,above]{3. $Y_A,\mathrm{TLS},E$}(.6,-3.5);
\node[event] at (.6,-3.95){4. Verify};
\draw[tls] (.6,-4.65)--node[msg,above]{5. $\operatorname{Enc}_K(p)$}(4,-4.65);
\node[event,draw=brick,text=brick] at (4,-5.22){6. Read $p$};
\draw[<->,navy,thick] (4,-5.9)--node[msg,above]{7. Relay $p/r$}(7.4,-5.9);
\draw[tls] (4,-6.6)--node[msg,above]{8. $\operatorname{Enc}_K(r)$}(.6,-6.6);
\end{tikzpicture}
\end{minipage}\hfill
\begin{minipage}[t]{.49\textwidth}\centering
\textbf{(b) Receiving-state copying}\par\vspace{4pt}
\begin{tikzpicture}[msg/.style={fill=white,inner sep=1pt,align=center},event/.style={draw=black!35,fill=white,rounded corners=2pt,inner sep=3pt,align=center}]
\node[endpoint] at (.6,0) {$C$};\node[endpoint,draw=brick] at (4,0) {$A$};\node[endpoint,draw=teal,dashed] at (7.4,0) {$S$};
\foreach \x in {.6,4,7.4}{\draw[black!30,dashed] (\x,-.35)--(\x,-6.75);}
\draw[tls] (.6,-.8)--node[msg,above]{1. $X,n$}(4,-.8);
\draw[evidence] (4,-1.4)--node[msg,above]{2. Challenge $n$}(7.4,-1.4);
\draw[evidence] (7.4,-2.0)--node[msg,above]{3. $E_S(pk_b,n)$}(4,-2.0);
\node[event] at (4,-2.65){4. Copied $\sigma_\Pi=sk_b$\\Fresh $y_A$; sign with $sk_b$};
\draw[tls] (4,-3.5)--node[msg,above]{5. $Y_A,\mathrm{TLS},E_S$}(.6,-3.5);
\node[event] at (.6,-3.95){6. Verify};
\draw[tls] (.6,-4.65)--node[msg,above]{7. $\operatorname{Enc}_K(p)$}(4,-4.65);
\node[event,draw=brick,text=brick] at (4,-5.22){8. Read $p$ via $K$};
\draw[<->,navy,thick] (4,-5.9)--node[msg,above]{9. Relay $p/r$}(7.4,-5.9);
\draw[tls] (4,-6.6)--node[msg,above]{10. $\operatorname{Enc}_K(r)$}(.6,-6.6);
\end{tikzpicture}
\end{minipage}
	\caption{Two alternative sufficient ESR constructions with the same read-and-relay outcome. (a) The attested-environment path uses a live protected environment accepted by the client's policy; the client-facing receiver may be colocated with it or composed with it remotely. (b) The receiving-state copying example uses the endorsed $sk_b$, reuses applicable evidence for $pk_b$, and authenticates an attacker-chosen key exchange. In both, $K$ denotes the client--attacker traffic keys; the genuine service is invoked over a separate protected connection. Selected logical messages are shown, and $n$ is used when the scheme requires a challenge.}
	\Description{Two side-by-side interaction diagrams compare using a live accepted protected environment with reproducing an endorsed receiving capability from copied state, followed by decryption and relay.}
	\label{fig:esr}
\end{figure*}

\subsection{Attested-Environment Path}
\label{sec:device-path}

The attested-environment path uses a live protected environment accepted by the client’s attestation policy. The substitute receiver may terminate the client-facing interaction in that environment, or combine an ordinary front end with a remotely accessible protected environment that performs the required attested operations. The path applies when the resulting composition satisfies the checks of $\Pi$ and allows the adversary to recover the protected request.

\begin{proposition}[Sufficiency of the attested-environment path]
	\label{prop:device}
	Under these acceptance and recoverability conditions,
	\begin{equation}
		T_\Pi \land R_\Pi
		\quad\Longrightarrow\quad
		\text{successful ESR}.
		\label{eq:device}
	\end{equation}
\end{proposition}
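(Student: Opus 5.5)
The plan is to exhibit the interaction of Figure~\ref{fig:esr}(a) as an explicit adversary strategy. Then we check, conjunct by conjunct, that it realizes $\mathsf{Win}_{\Pi}$ from Definition~\ref{def:esr}. Finally we check that the extra ESR conditions hold: the accepted receiver is a substitute endpoint, and the request is relayed to $S$. The argument is constructive and capability-based. It uses only the baseline capabilities of Section~\ref{sec:threat} (redirection and ordinary invocation of $S$), together with the two assumed resources $T_\Pi$ and $R_\Pi$. It relies on the standing hypotheses of the proposition: the composed receiver passes the checks of $\Pi$ ("acceptance"), and it allows the adversary to recover $p$ ("recoverability").

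First, I would fix the strategy. $A$ redirects $C$'s connection to a substitute endpoint. It answers $C$'s first flight $(X,n)$ with a fresh key share $Y_A=g^{y_A}$ of its own choosing. It uses $T_\Pi$ to produce the service-identity proofs required by $\Pi$, for example a CertificateVerify signature over the transcript containing $X$ and $Y_A$. It uses the live environment $R_\Pi$ to produce whatever evidence or protected operations the attestation and binding checks demand, either colocated or through a remote composition. Second, I would discharge $\mathsf{Accept}_C(\pi)$. The identity checks succeed because $T_\Pi$ is by definition the capability satisfying them. The environment and binding checks succeed because $R_\Pi$ is a live environment accepted by the policy, and the evidence is genuine rather than forged, so unforgeability is never invoked against us. Whatever binding $\Pi$ requires is computed by that environment on the current transcript. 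Third, $\mathsf{Send}_C(p)$ follows because an honest client that accepts releases $p$ under the derived traffic key $K$. Fourth, $\mathsf{Recover}_A(p)$ follows from the recoverability hypothesis: $A$ either knows $y_A$ directly or obtains $K$ or $p$ through the environment it controls. If there are multiple protection layers, each layer's endpoint is within the composed receiver, so all layers are peeled. Finally, $A$ forwards $p$ to $S$ over a separate ordinary connection and returns $r$ under $K$. This preserves the expected behavior and makes the outcome an ESR rather than an honest-session leak.

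The main obstacle is the acceptance step when $\Pi$ binds attestation to per-connection receiving state, as \sys{} does with the key share. There, attested evidence must cover $Y_A$ while $A$ still learns the corresponding secret or plaintext. I would handle this by leaning explicitly on the proposition's conditions rather than proving them generically. The proposition asserts sufficiency only for compositions that both pass $\Pi$'s checks and permit recovery. Since $A$ controls the live environment $R_\Pi$ by assumption, it can have that environment generate $y_A$ and attest $Y_A$, and then run the relay from within it or export the needed plaintext. I would remark that this case is exactly why $R_\Pi$, and not mere evidence replay, appears as a resource.
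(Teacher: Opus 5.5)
Your proposal is correct and follows the paper's own argument. $T_\Pi$ discharges the identity checks, the live $R_\Pi$ supplies genuine evidence or protected operations, and the stated acceptance and recoverability conditions then yield release, recovery of $p$, and relay to $S$ over a separate connection; your conjunct-by-conjunct check of $\mathsf{Win}_{\Pi}$ and your deferral of the per-session-binding case to those same hypotheses are faithful elaborations of the paper's three-sentence justification.
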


With \(T_\Pi\), the substitute receiver satisfies the service-identity checks of \(\Pi\); with \(R_\Pi\), it obtains the required attestation evidence or protected protocol operations. After the client accepts and releases the request, the adversary recovers \(p\), relays it to \(S\) over a separate connection, and returns the genuine result. This path uses the live protected environment directly and requires no copied receiving state.

\subsection{Receiving-State Copying Path}
\label{sec:copying-path}
\label{sec:copy-path}
\label{sec:binding-analysis}

The receiving-state copying path reproduces an attestation-endorsed receiving operation outside a policy-accepted protected environment while retaining applicable evidence. Let \(E\) denote the evidence and \(b\) its authenticated binding data. We use \(\sigma_\Pi\) for the protected server-side source state underlying the evidence-bound receiving capability. The boundary of \(\sigma_\Pi\) excludes secrets whose disclosure only exposes an already established honest session.

\begin{observation}[Evidence and receiving capability]
	\label{obs:binding}
	The channel protection contributed by attestation is mediated by the receiving capability authenticated by its evidence. For the copying path, \(\sigma_\Pi\) denotes the protected server-side source state underlying that capability. If \(\sigma_\Pi\) can be reproduced elsewhere while the evidence remains applicable, the corresponding receiving role can be reproduced there.
\end{observation}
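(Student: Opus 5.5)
The argument is a simulation: if the copied state \(\sigma_\Pi\) is available outside the protected environment, the adversary can run exactly the computation the honest receiver runs, and then every check the client performs gives the same verdict. To make this precise I would first model the honest server's receiving role in \(\Pi\) as a procedure \(\mathsf{Recv}_\Pi(\sigma_\Pi, \mathit{in}) \to (\mathit{out}, k)\). Here \(\mathit{in}\) is the client's handshake input, including \(X\) and any challenge \(n\). The output \(\mathit{out}\) covers the messages the client checks: the key share, the identity proof, and the evidence \(E\) with binding data \(b\). The value \(k\) is the receiving capability, i.e., the ability to derive \(K\) and read \(\operatorname{Enc}_K(p)\).

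The key modelling step is to make explicit what ``evidence remains applicable'' means. I would take it to mean that the client's acceptance predicate depends only on:
\begin{enumerate}
\item the bit strings \((\mathit{out}, E, b)\) it receives, and
\item the relation that \(E\) certifies between \(b\) and the state used to produce \(\mathit{out}\).
\end{enumerate}
In particular, acceptance does not depend on where \(\mathsf{Recv}_\Pi\) physically executes. This is exactly where the capability-based threat model of Section~\ref{sec:threat} is used. Attestation evidence cannot be forged, but nothing in the evidence can attest to the location of a computation that it does not itself perform.

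With this model the proof has three steps.
\begin{enumerate}
\item \emph{Acceptance.} Given \(\sigma_\Pi\) (i.e., \(B_\Pi\)), the adversary runs \(\mathsf{Recv}_\Pi(\sigma_\Pi,\mathit{in})\) on ordinary equipment \(R_0\), and obtains \(E\) either by replaying it or by relaying the challenge \(n\) to \(S\). By the definition of \(\sigma_\Pi\) as the source state underlying the evidence-bound capability, the output is distributed identically to the honest output. The client's checks are the same deterministic verification of the same inputs, so \(\mathsf{Accept}_C(\pi)\) holds.
\item \emph{Recovery.} Since the adversary holds \(k\), it decrypts \(\operatorname{Enc}_K(p)\), which gives \(\mathsf{Recover}_A(p)\).
\item \emph{Relay.} Using its baseline capability to invoke \(S\), the adversary relays \(p\) and returns the genuine response \(r\). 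This completes an ESR as in Definition~\ref{def:esr}.
\end{enumerate}
Figure~\ref{fig:esr}(b) instantiates these steps with \(\sigma_\Pi = sk_b\).

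The main obstacle is the first step. Its difficulty is not computational; it is definitional. One must argue that \(\sigma_\Pi\) really captures everything the evidence binds, so that no binding check can refer to something \(\sigma_\Pi\) lacks, such as freshness that only a live environment can produce. I would handle this by taking the statement's hypothesis literally: ``evidence remains applicable'' is defined to cover exactly those cases. As a consequence, the observation is contrapositively informative. A design is robust against copying precisely when \(\sigma_\Pi\) is per-session, as with \(y\) behind \(Y\), so that \(E\) cannot remain applicable to any reproduced state once the session has passed. The boundary clause excludes honest-session secrets, which prevents trivial counterexamples from ordinary key leakage.
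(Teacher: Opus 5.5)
Your core move matches the paper. The paper gives no separate argument for this observation and treats it as definitional. Right after stating it, the paper fixes the premises: $\sigma_\Pi$, together with protocol material and attacker-chosen randomness, reproduces the endorsed receiving operation; the evidence stays applicable; and any further state, protected operation, or authorization required by $\Pi$ is folded into those premises. That is exactly your literal reading of ``evidence remains applicable.''

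\textbf{Step 1 is flawed.} You let $\mathsf{Recv}_\Pi(\sigma_\Pi,\mathit{in})$ output ``the identity proof'' and then conclude $\mathsf{Accept}_C(\pi)$ from $B_\Pi$ and $R_0$ alone. The paper's resource model deliberately keeps $T_\Pi$ separate from $B_\Pi$: Proposition~\ref{prop:copy} needs $T_\Pi\land R_0\land B_\Pi$. Copied $\sigma_\Pi$ satisfies the identity check only where a single reusable key plays both roles, as in the shared-key rows of Table~\ref{tab:comparison}. Your sanity-check instance $\sigma_\Pi=sk_b$ is exactly such a case, which is why it hides the problem. For TLS+RA, HTTPA, CMaaS, PCC, or TLSLatch, $\sigma_\Pi$ is an ephemeral, unwrapping, or node key. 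Such a key cannot satisfy an enterprise or service-restricted identity check, e.g.\ a CertificateVerify under the service certificate. The client therefore rejects, and Step~1 is false as stated.

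Steps~2--3 (recovery, relay, ``completes an ESR'') are the content of Proposition~\ref{prop:copy}, not of this observation. To fix the argument, restrict the output of $\mathsf{Recv}_\Pi$ to the receiving messages and the evidence-bound data. Then conclude only what the observation asserts: the endorsed receiving role, i.e.\ applicable evidence together with the capability $k$, is reproduced on $R_0$. Acceptance additionally requires $T_\Pi$. You also do not need ``distributed identically to the honest output.'' The premise only requires that the client's checks pass with attacker-chosen randomness, as with $y_A$ in Figure~\ref{fig:esr}(b).

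\textbf{Your closing remark is not a contrapositive and is wrong as stated.} The contrapositive of the observation is only this: if the receiving role cannot be reproduced elsewhere, then $\sigma_\Pi$ cannot be reproduced there while the evidence remains applicable. ``Robust against copying precisely when $\sigma_\Pi$ is per-session'' is an equivalence the paper does not make, and its design implications contradict it:
\begin{itemize}
\item Credential overlap and freshness are separate columns in Table~\ref{tab:comparison}.
\item The reuse scope should be bounded, but explicitly ``need not coincide with a single session.''
\item The two ESR paths are sufficient constructions, not an exhaustive characterization.
\item A copied ephemeral secret in TLSLatch simply returns to the copying construction.
\end{itemize}
Per-session binding bounds how long acquired state stays useful. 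It is not necessary: periodic designs in which $T_\Pi$ does not supply $B_\Pi$ still force independent acquisition of $\sigma_\Pi$. Nor is it sufficient by itself: copy resistance still rests on confinement of the per-session secret.
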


$B_\Pi$ denotes acquisition of \(\sigma_\Pi\). The copying path requires that \(\sigma_\Pi\), together with available protocol material and attacker-chosen randomness, can reproduce the endorsed receiving operation and recover \(p\), and that the corresponding evidence remains applicable to the new interaction. Any additional state, protected operation, or authorization required by \(\Pi\) is part of these premises.

\begin{proposition}[Sufficiency of receiving-state copying]
	\label{prop:copy}
	Under the reproducibility and evidence-applicability premises above,
	\begin{equation}
		T_\Pi \land R_0 \land B_\Pi
		\quad\Longrightarrow\quad
		\text{successful ESR}.
		\label{eq:copy}
	\end{equation}
\end{proposition}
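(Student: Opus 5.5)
The plan is a constructive argument that mirrors the one given for Proposition~\ref{prop:device}. I would exhibit a substitute receiver built only from $T_\Pi$, ordinary equipment $R_0$, and the copied state $\sigma_\Pi$ obtained through $B_\Pi$. I would then check, conjunct by conjunct, that $\mathsf{Win}_{\Pi}$ of Definition~\ref{def:esr} holds and that the relay step turns this accepted-request disclosure into an ESR. The construction follows Figure~\ref{fig:esr}(b): $A$ redirects $C$ to a front end running on $R_0$, where it holds $\sigma_\Pi$ and attacker-chosen randomness such as a fresh $y_A$. Whenever $\Pi$ requires a challenge $n$, $A$ obtains evidence applicable to the interaction, either by reusing existing evidence or by forwarding $n$ to $S$ through its ordinary interfaces.

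The first step is $\mathsf{Accept}_C(\pi)$, which I would split into the three check families that $\Pi$ imposes. The identity checks are met by $T_\Pi$ in the same way as in Proposition~\ref{prop:device}. The environment checks are met by the evidence $E$, which the evidence-applicability premise lets $A$ present for this interaction. Since evidence accepted by policy is unforgeable, $A$ never needs to produce new evidence; it only relays or reuses $E$, and $R_0$ suffices for that. The binding checks require the receiver to perform the receiving operation that $E$ endorses through its binding data $b$. Here I would invoke Observation~\ref{obs:binding} together with the reproducibility premise: $\sigma_\Pi$, combined with the public protocol material and $A$'s randomness, reproduces that operation on $R_0$. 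Every check is therefore satisfied, and $C$, which executes correctly, accepts and performs $\mathsf{Send}_C(p)$.

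The second step is $\mathsf{Recover}_A(p)$. The reproducibility premise already includes recovering $p$, for example by deriving $K$ from $y_A$ and $X$ in the shared-key example. When several protection layers cover $p$, I would take the premise to cover every layer the endorsed capability terminates, and $T_\Pi$ to cover any outer identity-bound layer. The last step is the relay: $A$ invokes $S$ over a separate ordinary connection using the baseline capability, forwards $p$, receives the genuine result $r$, and returns it under $K$. This preserves the expected behavior, so the event is an ESR and not merely a disclosure.

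I expect the main obstacle to be the binding-check step, specifically ensuring that the reproduced operation and the applicable evidence line up with the same session. That is where the proof leans entirely on the stated premises. I would handle it by pointing out that any extra state, protected operation, or authorization required by $\Pi$ has been folded into those premises by definition, so nothing beyond $R_0$ is needed. I would also remark that the boundary of $\sigma_\Pi$ excludes secrets that matter only to already established sessions, which guarantees that the constructed event is endpoint substitution and not leakage from an honest session.
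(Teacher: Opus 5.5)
Your proposal is correct and follows the paper's own argument. $T_\Pi$ discharges the identity checks. $R_0$ together with the copied $\sigma_\Pi$ and the reproducibility and evidence-applicability premises lets the adversary instantiate the endorsed receiving operation and present applicable evidence. The client then accepts and releases $p$, which the substitute receiver recovers and relays to $S$ before returning the genuine result. Your conjunct-by-conjunct check of $\mathsf{Win}_{\Pi}$, the challenge forwarding from Figure~\ref{fig:esr}(b), and your handling of layered protection elaborate that same construction rather than taking a different route.
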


With \(T_\Pi\), the substitute receiver satisfies the service-identity checks of \(\Pi\). With \(R_0\) and \(B_\Pi\), the adversary instantiates the reproduced receiving operation on ordinary equipment and presents applicable evidence. After the client accepts and releases the request, the substitute receiver recovers \(p\), relays it to \(S\), and returns the genuine result.

Figure~\ref{fig:esr}(b) instantiates the copying path for shared authentication-key binding~\cite{weinhold2025tlsra}. The reusable private key \(sk_b\) serves as the receiving state \(\sigma_\Pi\), while applicable evidence endorses its public counterpart \(pk_b\). After acquiring \(sk_b\), the attacker chooses a fresh ephemeral key \(y_A\), authenticates that exchange with \(sk_b\), and obtains applicable evidence for \(pk_b\), including a fresh challenge when required. Because the evidence endorses the reusable authentication key rather than the receiving state of the new exchange, the attacker derives the client-facing traffic keys from \(y_A\), recovers \(p\), and relays it to the production service.

For this shared-key configuration, the same reusable private state supplies both service authentication and the attestation-endorsed receiving capability, so \(T_\Pi \Rightarrow B_\Pi\). Under Proposition~\ref{prop:copy}, the copying path therefore reduces from \(T_\Pi \land R_0 \land B_\Pi\) to \(T_\Pi \land R_0\). When the endorsed receiving state is independent of the reusable service credential, \(T_\Pi\) does not supply \(B_\Pi\). The two paths are sufficient constructions for ESR rather than an exhaustive characterization of confidentiality failures.

\subsection{Mapping Existing Designs}
\label{sec:matrix}

%We now map representative attested-channel designs to the resource model. Table~\ref{tab:comparison} records the service identity $T_\Pi$, the protected-environment requirement $R_\Pi$, and the protected receiving state $\sigma_\Pi$ relevant to the copying path. Some rows describe general constructions that admit several deployment choices, whereas others describe concrete systems with a fixed identity and attestation policy; circles therefore denote compatible deployment alternatives rather than resources an adversary may choose during an attack.

Table \ref{tab:comparison} maps representative attested-channel designs to the ESR resource model. For each scheme or configuration, it records the required service-identity resource \(T_\Pi\), protected-environment resource \(R_\Pi\), and protected receiving state \(\sigma_\Pi\). The table then separates two properties relevant to the copying path: whether the service credential supplies that receiving state (\(T_\Pi \Rightarrow B_\Pi\)), and the scope over which the attestation-to-receiving-capability binding remains applicable. The discussion below groups related mechanisms for exposition rather than partitioning the design space into disjoint classes.

We classify this scope as long-lived, periodic, or per-session. Long-lived bindings remain reusable with persistent credential state; periodic bindings are renewed by deployment policy; and per-session bindings are tied to the current protocol session by construction. A periodic deployment may refresh as often as once per session, but such freshness is policy-driven rather than intrinsic to the construction.

%The table also records whether compromise of the required service credential supplies the receiving state ($T_\Pi\Rightarrow B_\Pi$), and the freshness of the attestation-to-receiving-capability binding. \emph{Long-lived} denotes a binding reusable with persistent credential state, \emph{periodic} denotes policy-controlled renewal, and \emph{per-session} denotes a binding intrinsically tied to the current protocol session. Periodic renewal may be configured as frequently as every session; the distinction is whether such freshness is required by the construction itself.

% !TeX root = ../main.tex
\begin{table*}[tp]
	\caption{Resources and binding properties for the two sufficient ESR
		paths. The attested-environment path uses $T_\Pi,R_\Pi$
		(Equation~\eqref{eq:device}); the receiving-state copying path uses
		$T_\Pi,R_0,B_\Pi$ (Equation~\eqref{eq:copy}).}
	\label{tab:comparison}
	\centering
	\setlength{\tabcolsep}{3pt}
	\renewcommand{\arraystretch}{1.12}
	
	\begin{tabularx}{\textwidth}{
			@{}
			>{\raggedright\arraybackslash}p{4.25cm}
			*{6}{>{\centering\arraybackslash}p{.38cm}}
			>{\raggedright\arraybackslash}X
			>{\centering\arraybackslash}p{1.25cm}
			>{\centering\arraybackslash}p{1.95cm}
			@{}}
		\toprule
		Scheme / configuration
		& \multicolumn{3}{c}{Identity $T_\Pi$}
		& \multicolumn{3}{c}{Environment}
		& \multicolumn{3}{c}{Receiving-state acquisition $B_\Pi$} \\
		\cmidrule(lr){2-4}
		\cmidrule(lr){5-7}
		\cmidrule(lr){8-10}
		& $T_0$ & $T_E$ & $T_S$
		& $R_0$ & $R_V$ & $R_P$
		& Required state $\sigma_\Pi$
		& $T_\Pi\!\Rightarrow\!B_\Pi$
		& Freshness \\
		\midrule
		
		RA-TLS~\cite{knauth2019ratls}
		& \dc & & & & \op & \op
		& TLS authentication key
		& -- & Periodic \\
		
		Platform Certificate~\cite{goldman2006linking}
		& & \op & \op & & \op & \op
		& Shared service key
		& Yes & Long-lived \\
		
		RATLS~\cite{walther2022ratls}
		& & \op & \op & & \op & \op
		& Shared service key
		& Yes & Long-lived \\
		
		RA-TLS + CA~\cite{knauth2019ratls}
		& & \op & \op & & \op & \op
		& CA- and RA-endorsed key
		& Yes & Long-lived \\
		
		Trusted Channels~\cite{armknecht2008trusted}
		& & \op & \op & & \op & \op
		& Shared service key
		& Yes & Long-lived \\
		
		TC4SE, initialized TLS~\cite{hamidy2023tc4se}
		& & & \dc & \dc & & 
		& Pinned authentication key
		& Yes & Long-lived \\
		
		Attested CSR, later TLS version~\cite{attestedcsr14}
		& & \op & \op & \dc & &
		& Issued authentication key
		& Yes & Long-lived \\
		
		TLS+RA~\cite{weinhold2025tlsra}
		& & \op & \op & & \op & \op
		& Current ephemeral private key
		& No & \mbox{Per-session} \\
		
		HTTPA/1 + HTTPS~\cite{king2023httpa}
		& & \op & \op & & \op & \op
		& Attested unwrapping key
		& No & Periodic \\
		
		TNG-single~\cite{tngsoftware}
		& \dc & & & & \op & \op
		& Tunnel authentication key
		& -- & Periodic \\
		
		TNG-nested~\cite{tngsoftware}
		& & \op & \op & & \op & \op
		& Outer tunnel authentication key
		& No & Periodic \\
		
		CMaaS~\cite{cmaassdk}
		& & \dc & & & \dc &
		& Node negotiation key
		& No & Periodic \\
		
		Apple PCC~\cite{applerequesthandling,applepccsoftware}
		& & \dc & & & & \dc
		& REK decapsulation key
		& No & Periodic \\
		
		\textbf{\sys{}}
		& & \op & \op & & \op & \op
		& \textbf{Current ephemeral private key}
		& \textbf{No} & \textbf{\mbox{Per-session}} \\
		
		\bottomrule
	\end{tabularx}
	
	\par\vspace{3pt}
	\begin{minipage}{\textwidth}
		\textbf{Resources.}
		$\dc$: stated configuration;
		$\circ$: compatible deployment alternative (one fixed choice per group);
		blank: not selected.
		Some rows describe general constructions that admit several deployment
		choices, whereas others represent a concrete system configuration.
		
		\textbf{Credential-state overlap.}
		``Yes'' means that acquiring the service-authentication capability required by \(T_\Pi\) also supplies the specific receiving state \(\sigma_\Pi\), i.e., \(T_\Pi \Rightarrow B_\Pi\);
		``No'' means that the two resources remain distinct;
		``--'' denotes \(T_0\) configurations in which the service identity is generated within the attested environment rather than provisioned as an independent credential; the \(T_\Pi \Rightarrow B_\Pi\) overlap is therefore not separately defined.
		
		\textbf{Freshness.}
		$Long-lived$: reusable with persistent state; $Periodic$: renewed by deployment policy; $Per-session$: intrinsically bound to the current session.
	\end{minipage}
\end{table*}

\paragraph{Attestation-bound authentication state.}
A common construction binds platform evidence to reusable TLS authentication state. Typical RA-TLS binds an enclave-generated TLS key to SGX evidence and typically uses it through a self-signed certificate~\cite{knauth2019ratls}. Platform Certificate, RATLS, Trusted Channels, and TC4SE similarly bind platform trust to reusable authentication state, while Attested CSR performs the binding during certificate enrollment~\cite{goldman2006linking,walther2022ratls,armknecht2008trusted,hamidy2023tc4se,attestedcsr14}. In self-signed RA-TLS, the TLS identity is generated inside the attested environment rather than provisioned as an independent enterprise credential. Where the same reusable private state supplies both service authentication and the evidence-bound receiving capability, \(T_\Pi \Rightarrow B_\Pi\). RA-TLS with a CA adds conventional PKI identity while retaining this overlap.

%For the stated configurations in which the same reusable private state supplies both service authentication and the evidence-bound receiving capability, compromise of the service credential also supplies $B_\Pi$, yielding $T_\Pi\Rightarrow B_\Pi$. A self-signed $T_0$ construction avoids dependence on a separately provisioned service credential, but correspondingly provides no independent enterprise or service identity. RA-TLS with a CA can retain the same attested leaf key while adding conventional PKI identity; in that configuration the credential and receiving state still coincide.

TNG is a mainstream open-source transparent gateway for confidential computing that applies this attested-TLS pattern at the network layer~\cite{tngsoftware}. In its RA-TLS mode, attestation binds the tunnel certificate key, while the gateways transparently carry application traffic over the resulting encrypted tunnel~\cite{tngsoftware}. TNG-single uses this tunnel as the application protection layer. TNG-nested instead carries the application's original TLS connection inside the tunnel, so the tunnel authentication state and application service credential remain distinct, and the inner TLS connection remains a separate protection layer for \(\mathsf{Recover}_A(p)\).

\paragraph{Application-layer attested channels.}
HTTPA preserves HTTPS and establishes an additional attested channel above it~\cite{king2023httpa}. Its evidence binds a TEE-generated public key whose private key unwraps fresh client-provided pre-session material, from which trusted-session keys are derived. The protected receiving state \(\sigma_\Pi\) is therefore the attested unwrapping private key. Because the attestation result may be cached and reused, the binding is periodic. The outer HTTPS endpoint alone does not satisfy \(\mathsf{Recover}_A(p)\) while the HTTPA protection remains intact.

\paragraph{Deployed confidential-cloud mechanisms.}
CMaaS separates gateway-facing HTTPS from an independently protected channel to the confidential inference node~\cite{cmaassdk}. Attestation binds a node public key rooted in protected node state, under which fresh session material is established for individual interactions. We therefore take \(\sigma_\Pi\) to be the attested node root private key and classify the binding as periodic. Apple PCC uses a different hardware and service architecture but exhibits a similar resource-level structure: request secrets are released to attested node public keys backed by protected request-decryption state~\cite{applerequesthandling,applepccsoftware}. Its receiving state is likewise periodically refreshed rather than intrinsically tied to one request.

\paragraph{Current-session TLS bindings.}
TLS+RA retains conventional certificate-based service authentication while binding attestation to the current TLS handshake through the DHE-derived shared secret and handshake context~\cite{weinhold2025tlsra}. For the copying path, \(\sigma_\Pi\) is the server's current ECDHE private key \(y\), from which the shared secret is derived. The reusable service credential is independent of \(y\), so \(T_\Pi \nRightarrow B_\Pi\), and the binding is per-session.

These designs differ in protocol structure and integration point, but the ESR model evaluates them through the same questions: which protected receiving state must be acquired, whether service authentication also supplies that state, and how long the corresponding attestation binding remains applicable. For layered constructions, \(\mathsf{Recover}_A(p)\) additionally requires overcoming every protection layer that still covers the request.

\subsection{Design Implications}
\label{sec:design-implications}
\label{sec:integration}

The ESR analysis yields several design implications. First, hardware attestation alone does not establish service authorization; PCC-style services retain security dependencies in deployment authorization and infrastructure control. Second, service authentication and attestation should provide independent barriers, avoiding constructions in which compromise of a reusable service credential also supplies the protected receiving state (\(T_\Pi \Rightarrow B_\Pi\)). Third, the reuse scope of attestation-endorsed receiving state should be bounded, although the appropriate lifetime need not coincide with a single session. Finally, attestation should authenticate the receiving capability that actually protects the request, and the binding should be verified before sensitive data is released.

Section 4 applies these implications to TLSLatch, which binds attestation to the receiving state of the current TLS connection while preserving independent service authentication and verification before release.

%The analysis suggests three requirements for robust session authentication. First, attestation should authenticate the receiving capability that actually protects the client's request, rather than only a reusable credential adjacent to that channel. Second, compromise of a reusable service credential should not by itself supply the attestation-endorsed receiving state; in the resource model, the design should avoid $T_\Pi\Rightarrow B_\Pi$. Third, the binding must be verified before sensitive data is released to the receiver. Section~\ref{sec:design} presents TLSLatch, which realizes these requirements by binding attestation to the receiving state of the current TLS connection while retaining independent service authentication.

% !TeX root = ../main.tex
\section{TLSLatch Design}
\label{sec:design}
\label{sec:security}

%Section~\ref{sec:design-implications} identifies three requirements for
%robust session authentication: attestation should bind to the receiving
%capability of the current interaction, service identity should remain
%independent of the attested receiving state, and sensitive data should
%not be released before the binding is verified. TLSLatch realizes these
%requirements around an otherwise unmodified full-ECDHE TLS~1.3
%connection.
%
%Representative designs place attestation at different integration
%points. Some integrate it into certificates, the TLS stack, or the
%application protocol; transparent proxies can avoid application changes
%but may retain an additional encrypted payload path.
%Table~\ref{tab:engineering} summarizes these choices. TLSLatch instead
%uses two privileged components outside the TLS library: a server-side
%observer establishes the protected provenance of the current TLS key
%share, while a client-side gate delays sensitive traffic until that
%provenance is verified. After verification, application traffic follows
%the original TLS data path.
Translating session binding into a deployable system requires more than satisfying the security conditions of Section 3. Confidential cloud services are typically built around existing applications, TLS stacks, certificates, and network infrastructure, so changes to these components introduce development, deployment, and maintenance costs. Additional protection layers can also remain on the data path and impose per-request processing overhead. TLSLatch therefore adds session attestation around an existing TLS connection rather than replacing or encapsulating it. It preserves the application, TLS stack, certificates, and application protocol, and returns traffic to the native TLS path after authentication. This model also fits TLS-passthrough deployments, where intermediate load balancers can continue forwarding the original encrypted connection without participating in attestation. Table 3 compares these deployment properties with representative attested-channel designs.

TLSLatch realizes this design with two components outside the TLS library. A server-side observer establishes the protected provenance of the current TLS key share, while a client-side gate delays sensitive traffic until verification succeeds. Once authorized, the gate is removed from the payload path and the original TLS connection proceeds normally.

% !TeX root = ../main.tex
\begin{table}[t]
	\caption{Representative integration points and additional payload
		encryption.}
	\label{tab:engineering}
	\centering
	\setlength{\tabcolsep}{3pt}
	\renewcommand{\arraystretch}{1.10}
	
	\begin{tabularx}{\columnwidth}{
			@{}
			>{\hsize=1.52\hsize\linewidth=\hsize\raggedright\arraybackslash}X
			>{\hsize=.60\hsize\linewidth=\hsize\centering\arraybackslash}X
			>{\hsize=1.04\hsize\linewidth=\hsize\centering\arraybackslash}X
			>{\hsize=.84\hsize\linewidth=\hsize\centering\arraybackslash}X
			@{}}
		\toprule
		Scheme
		& \shortstack{App\\changes}
		& \shortstack{Integration\\point}
		& \shortstack{Extra\\encryption} \\
		\midrule
		
		RA-TLS~\cite{knauth2019ratls}
		& Yes & Certificate & No \\
		
		TLS+RA~\cite{weinhold2025tlsra}
		& Yes & TLS stack & No \\
		
		HTTPA~\cite{king2023httpa}
		& Yes & Application & Yes \\
		
		TNG single~\cite{tngsoftware}
		& No & Proxy & No \\
		
		TNG nested~\cite{tngsoftware}
		& No & Proxy & Yes \\
		
		CMaaS SDK~\cite{cmaassdk}
		& Yes & SDK & Yes \\
		
		CMaaS proxy~\cite{cmaassdk}
		& No & Proxy & Yes \\
		
		\textbf{\sys{}}
		& \textbf{No} & \textbf{Agent} & \textbf{No} \\
		
%		\sys{}-Batch
%		& No & Agent & No \\
		
		\bottomrule
	\end{tabularx}
	
	\par\vspace{3pt}
	\raggedright
	Application changes exclude deployment configuration such as connection endpoints and trust policies.
\end{table}

\subsection{Bind Attestation to the Current TLS Receiver}
\label{sec:binding}

In a full TLS 1.3 ECDHE handshake, the server TLS endpoint generates the ephemeral
share $Y=g^y$ and sends it in ServerHello. TLSLatch binds the evidence to this share by hashing a role-tagged encoding of the negotiated group and $Y$:
\begin{equation}
	b=\Hash(\operatorname{Encode}(\text{server-role tag},\text{group},Y)).
	\label{eq:server-binding}
\end{equation}
The evidence carries $b$ together with the platform claims used by the
client's attestation policy. The client independently computes the same
digest from the ServerHello of its own connection and verifies
the evidence and binding. Evidence for another share therefore cannot
authenticate this connection. Because \(Y\) is public handshake data, TLSLatch can construct this binding from observable protocol state without extracting ephemeral secret state.

Binding evidence to \(Y\) requires establishing the protected provenance of that share. TLSLatch therefore attests only a server share associated with an active local TLS flow. The observer establishes which local flow emitted \(Y\), while the approved guest kernel and TLS implementation provide the key-generation and key-use semantics for that flow inside the CVM. Together, these guarantees bind \(Y\) to the protected TLS endpoint that participates in the exchange. Section 5 describes how the prototype realizes this flow association without modifying the TLS library.

%These requirements separate local observation from TLS semantics.
%Observation establishes which TLS flow emitted the public share, while
%the approved guest kernel and TLS implementation provide the
%key-generation and key-use semantics. Mere packet transit through the
%protected environment is not sufficient. Section~\ref{sec:implementation}
%describes how the prototype associates observed shares with active local
%flows without modifying the TLS library.

\subsection{Verify Before Release}
\label{sec:gate}

The client installs a per-connection gate before releasing protected traffic. ClientHello is allowed to proceed, while attestation runs in parallel with the unmodified TLS handshake. The server’s normal handshake flight reaches the client TLS stack, allowing certificate validation and TLS key derivation to proceed while attestation is in progress. Figure~\ref{fig:latch-flow} shows this overlap.

\begin{figure*}[t]
	\centering
	\begin{tikzpicture}[msg/.style={fill=white,inner sep=2pt,align=center}]
\node[endpoint] at (0,0) {Client TLS};
\node[endpoint] at (7,0) {Client agent};
\node[endpoint,draw=teal] at (14,0) {CVM TLS + agent};
\foreach \x in {0,7,14}{\draw[black!30,dashed](\x,-.4)--(\x,-4.9);}
\draw[tls](0,-.85)--node[msg,above]{1. ClientHello: observe lookup share; allow transmission}(14,-.85);
\draw[evidence](7,-1.55)--node[msg,above]{2. Query local connection record}(14,-1.55);
\draw[tls](14,-2.25)--node[msg,above]{3. ServerHello + normal TLS server flight}(0,-2.25);
\node[msg,draw=brick,text=brick] at (2.5,-3.05){Hold encrypted outbound records};
\draw[evidence](14,-3.05)--node[msg,above]{4. Evidence for local server share}(7,-3.05);
\draw[evidence](7,-3.9)--node[msg,above]{5. Verify; authorize live flow; release}(0,-3.9);
\draw[tls](0,-4.65)--node[msg,above]{6. Original TLS finishes; payload bypasses agent processing}(14,-4.65);
\end{tikzpicture}
	\caption{TLSLatch changes release timing rather than TLS messages or
		payload encryption. ClientHello proceeds normally while the client gate
		holds later outbound records. In parallel, the server attests the
		protected provenance of its registered TLS share. After verification,
		the held records are released and subsequent application traffic follows
		the native TLS path.}
	\Description{The original TLS connection passes through a temporary
		client gate to the protected server. A parallel attestation exchange
		verifies the server share observed on that connection. After successful
		verification, the gate releases the connection and subsequent payload
		traffic follows the original TLS path.}
	\label{fig:latch-flow}
\end{figure*}

%The gate holds outbound TLS records after ClientHello, including client
%Finished and any application records queued behind it. The gate treats outbound TLS records as opaque and releases them unchanged after verification. The server emits its
%key share and server handshake flight before receiving client Finished,
%so evidence for $Y$ can be generated while the client gate holds its
%outbound flight. After authorization, the temporary control is removed
%and the agent no longer participates in application-data processing.
%The implementation mechanisms used to retain, authorize, and release a
%flow are described in Section~\ref{sec:implementation}.

The gate then holds subsequent outbound TLS records, including client Finished and any queued application data. Because the server can send its handshake flight before receiving client Finished, evidence for \(Y\) can be generated and verified while these records are held. If attestation succeeds, the gate marks the connection as authorized and releases the queued records; otherwise, it drops them. The TLS stack independently performs its normal certificate, transcript, and Finished validation. After authorization, subsequent traffic bypasses the gate and continues on the native TLS path.

\begingroup
\setlength{\emergencystretch}{3em}
\subsection{Security Argument}
\label{sec:endpoint-guarantee}
We formalize the service-credential-only case of the receiving-state copying path. The configuration uses fresh, full TLS~1.3 ECDHE handshakes. The adversary controls the network, holds the accepted service signing credential, and may invoke the genuine service through its ordinary interface. The client and the policy-accepted protected endpoint execute correctly. Request recovery means disclosure outside the service's authorized output, as in Definition~\ref{def:esr}.

\subsubsection{Execution Model and Assumptions}
For a client session $s$, let $X_s=g^{x_s}$ and $Y_s=g^{y_s}$ denote its negotiated client and server shares, and let
\[
 b_s=H(\operatorname{Encode}(\mathsf{server},\mathsf{group}_s,Y_s)).
\]
The encoding is injective and role-separated. Write $\mathsf{EvidenceOK}(s)$ for acceptance of policy-compliant evidence matching $b_s$, $\mathsf{TLSOK}(s)$ for successful ordinary TLS validation, and $\mathsf{Release}(s,p)$ for the release of protected application data carrying $p$.

The argument uses four explicit assumptions.
\begin{description}
\item[A1: Evidence and binding.] Accepted evidence authenticates the approved program and its report data. Forging evidence or finding distinct accepted inputs with the same binding hash has negligible probability.
\item[A2: Protected provenance and state.] The approved observer endorses only a share generated and used by the protected TLS endpoint for the corresponding local flow. That endpoint generates fresh private shares, validates peer shares, and confines the private share, derived receiving keys, and equivalent decryption capability to the protected boundary. The adversary has neither copied receiving state nor a policy-accepted endpoint that discloses request plaintext to it. This is the implementation and deployment premise supplied by the trusted TLS stack, kernel, and observer.
\item[A3: Session-local gating.] The client compares evidence with the share in its own negotiated handshake. It releases protected application data only after both evidence verification and ordinary TLS validation succeed, and fails closed on mismatch or failure:
\[
\begin{aligned}
\mathsf{Release}(s,p)\Rightarrow{}&\mathsf{EvidenceOK}(s)\\
 &{}\land\mathsf{TLSOK}(s).
\end{aligned}
\]
ClientHello may proceed before this event. The TLS stack remains responsible for certificate, transcript, and Finished checks.
\item[A4: TLS key and record security.] For fresh, validated ECDHE shares whose private and derived secrets remain confined, the TLS~1.3 key schedule and record protection preserve request confidentiality, including when the reusable certificate signing credential is disclosed. This assumption concerns secrecy of the resulting channel with the bound protected share; certificate-based peer authentication alone is insufficient once that credential is compromised.
\end{description}

\subsubsection{Binding and Disclosure Properties}
\begin{lemma}[Accepted-share provenance]
\label{lem:provenance}
Under A1--A3, any released request is encrypted in a TLS session whose server share is the share endorsed by a policy-accepted protected endpoint, except upon evidence forgery or binding-hash failure.
\end{lemma}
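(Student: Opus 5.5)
The plan is to argue by tracing backward from the release event. Fix a client session $s$ and a request $p$ with $\mathsf{Release}(s,p)$. By A3 the release implies $\mathsf{EvidenceOK}(s)\land\mathsf{TLSOK}(s)$. The application records carrying $p$ are protected under the traffic keys of $s$, which are derived from the negotiated pair $(X_s,Y_s)$. It therefore suffices to show that $Y_s$ is the share endorsed by some policy-accepted protected endpoint. The argument uses only the evidence side; $\mathsf{TLSOK}(s)$ serves only to fix the negotiated transcript, and hence $\mathsf{group}_s$ and $Y_s$, to which the keys are tied.

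First, unpack $\mathsf{EvidenceOK}(s)$. The client has accepted policy-compliant evidence $E$ whose report data equals $b_s$. Crucially, A3 requires that $b_s$ be computed locally from the client's own ServerHello, not taken from any adversary-supplied value. By A1, except with negligible probability of evidence forgery, $E$ was produced by the approved program, here the observer, running in a policy-accepted protected environment. That program reported binding data $b'=\Hash(\operatorname{Encode}(\mathsf{server},\mathsf{group}',Y'))$ for some share $Y'$ it endorsed, and $b'=b_s$.

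Second, move from hash equality to share equality. The encoding is injective and role-separated, so $b'=b_s$ with $(\mathsf{server},\mathsf{group}',Y')\neq(\mathsf{server},\mathsf{group}_s,Y_s)$ would give distinct accepted inputs with the same binding hash. A1 bounds that event as negligible, so $Y'=Y_s$ and $\mathsf{group}'=\mathsf{group}_s$. Role separation also rules out passing off a client-role share as a server share. A2 then ensures that the approved observer endorses $Y'$ only if it was generated by the protected TLS endpoint for an active local flow. Hence $Y_s$ is a share of a policy-accepted protected endpoint, which is the claim.

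The main obstacle is stating the failure events cleanly. The conclusion must hold simultaneously for every released request, and the adversary holds the signing credential and can interleave many sessions and evidence requests. I would handle this by a union bound over sessions: the bad event is either an accepted evidence object not produced by an approved program, or a binding-hash collision among the polynomially many encoded inputs. Each is negligible under A1. Outside this bad event the argument above applies uniformly to each session.

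A secondary subtlety is evidence replay. The adversary may relay genuine evidence for an honest share $Y'$ into its own session. The step above already covers this case, since acceptance forces $Y_s=Y'$. The lemma claims only provenance of the share, not secrecy; secrecy of the keys is deferred to A2 and A4 in the subsequent disclosure argument.
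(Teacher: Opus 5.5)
Your proposal is correct and follows essentially the same route as the paper. A3 turns release into acceptance of evidence matching the client-computed $b_s$, A1 attributes that evidence to the approved program, A2 makes its binding input a protected local share, and the injective role-separated encoding plus collision resistance force $Y_s$ to equal that share, with the check against the client's own handshake blocking replay; your union bound over sessions only spells out the ``except upon evidence forgery or binding-hash failure'' clause, which the paper simply excludes.
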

\begin{proof}
Release implies acceptance of evidence matching $b_s$ by A3. Excluding evidence forgery, that report was produced by the approved program. By A2, its binding input contains a locally generated and used protected share $Y^*$. Excluding a hash collision, injective encoding and equality of the verified bindings imply equality of role, group, and share; hence $Y_s=Y^*$. The comparison is with the client's own handshake, so evidence for a different share cannot authorize this release.
\end{proof}

\begin{proposition}[Credential-only ESR resistance]
\label{prop:credential}
Under A1--A4, possession of the reusable service signing credential alone does not enable the receiving-state copying ESR. In particular, for a uniformly sampled $\ell$-bit secret request independent of public information and authorized output,
\[
\begin{aligned}
&\Pr[\mathsf{Accept}_C(s)\land\mathsf{Release}(s,p)
      \land\mathsf{Recover}_A(p)]\\
&\qquad\leq 2^{-\ell}+\operatorname{negl}(\lambda),
\end{aligned}
\]
where $\lambda$ is the security parameter. For general requests, confidentiality is understood relative to the information already available from public metadata and authorized output.
\end{proposition}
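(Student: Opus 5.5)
The plan is a game-hopping reduction that combines Lemma~\ref{lem:provenance} with A2 and A4. Fix a client session $s$ in which $\mathsf{Accept}_C(s)\land\mathsf{Release}(s,p)$ holds, and let $\mathsf{Bad}$ denote evidence forgery or a binding-hash collision. By A1, $\Pr[\mathsf{Bad}]\le\operatorname{negl}(\lambda)$, so it suffices to bound the winning probability conditioned on $\neg\mathsf{Bad}$. On this event, Lemma~\ref{lem:provenance} gives $Y_s=Y^*$, where $Y^*$ is a share generated and used by the policy-accepted protected endpoint for an active local flow.

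The next step is to show that the adversary's resources do not reach the receiving state for $Y_s$. By A2, the private exponent $y_s$, the derived handshake and traffic secrets, and any equivalent decryption capability stay inside the protected boundary. A2 also gives that the adversary holds neither copied receiving state $B_\Pi$ nor a policy-accepted endpoint that discloses plaintext. The signing credential $sk$ yields only signatures on transcripts. I would note that in the shared-key example of Figure~\ref{fig:esr}(b) the attacker substitutes its own $y_A$, but this is exactly what the comparison in A3 and Lemma~\ref{lem:provenance} rules out, because $Y_A\neq Y^*$ is not endorsed. Hence $T_\Pi\nRightarrow B_\Pi$ in this configuration, and the copying-path premises of Proposition~\ref{prop:copy} fail.

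I would then invoke A4. The session key $K$ is derived from $g^{x_sy_s}$, where both shares are fresh and validated and $y_s$ is confined. A4 explicitly covers disclosure of the certificate signing credential, so $\operatorname{Enc}_K(p)$ together with the adversary's view is computationally independent of $p$ beyond what the authorized output reveals. The adversary may also relay messages to $S$, but that interface returns only authorized output. Since $p$ is a uniform $\ell$-bit string independent of public information and authorized output, a reduction turns any $\mathsf{Recover}_A(p)$ strategy into either a distinguisher against A4's record confidentiality or a blind guess. The guess succeeds with probability $2^{-\ell}$. Summing the cases gives $2^{-\ell}+\operatorname{negl}(\lambda)$.

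I expect the main obstacle to be the A4 step. Treating $\mathsf{Recover}_A$ as a search event requires a careful hybrid: replace $p$ in the released records by an independent uniform $p'$, and argue that the adversary's view is indistinguishable. I would also need to handle the adversary's invocations of $S$, which involve the same service and possibly correlated sessions. I would do this by simulating those calls using only the authorized-output oracle, and by arguing that the protected endpoint's other sessions use independent fresh shares. Once the view is independent of $p$, recovery probability is at most $2^{-\ell}$, and the hybrid distance is absorbed into $\operatorname{negl}(\lambda)$.
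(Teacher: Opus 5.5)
Your proposal is correct and follows the paper's own proof. Both condition away the negligible evidence-forgery and binding-hash failures of Lemma~\ref{lem:provenance}, rule out an adversary-chosen share via A2 and the session-local comparison of A3, note that the copied signing credential supplies neither $y_s$ nor the derived receiving keys, and invoke A4 (relaying to $S$ yields no plaintext) to reduce recovery to a primitive failure or a $2^{-\ell}$ guess; your explicit hybrid replacing $p$ by an independent $p'$ and your simulation of the adversary's calls to $S$ just spell out what the paper states in prose.
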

\begin{proof}
Condition on the absence of the negligible evidence and binding failures in Lemma~\ref{lem:provenance}. If the adversary supplies a server share for which it independently knows the receiving secret, that share cannot satisfy the protected-origin and confinement premise A2; thus it cannot pass the evidence check for the released session. If it instead forwards the endorsed protected share, its copied signing credential supplies neither the ECDHE private value nor the resulting receiving keys. A4 then protects the released request. Relaying the genuine handshake and ciphertext preserves delivery to the protected receiver without providing plaintext to the relay. Recovery therefore requires a primitive failure or guessing the independent secret. Adding the excluded negligible failure probabilities yields the bound.
\end{proof}

The proposition composes a trusted provenance invariant with TLS confidentiality; it is not a proof that packet observation alone establishes key confinement. Finished supplies ordinary TLS key confirmation, rather than a standalone proof of knowledge of $y_s$. A copied ephemeral secret or an adversary-controlled accepted environment changes the resource premises and returns to the corresponding ESR construction in Section~\ref{sec:esr}.

\subsubsection{Mutual Attestation and Batch}
For mutual attestation, the client share uses the distinct binding
\[H(\operatorname{Encode}(\mathsf{client},\mathsf{group}_s,X_s)).\]
The same provenance lemma applies with the endpoint roles exchanged. For data in either direction, the disclosure proposition requires verification and gating before release in that direction, together with confinement of that receiver's secrets. Role separation prevents a server-share endorsement from substituting for a client-share endorsement.

Batch establishes a different property. If a verifier accepts the challenge-bound digest of a canonical manifest, A1 and correct manifest processing imply that the approved observer matched the manifest's share observations. This is delayed evidence of those observations: business requests may already have completed. Its proof obligation is manifest authenticity and coverage, rather than the pre-release implication in A3. Passive share observations alone do not establish Finished validation or complete local TLS termination.

\subsubsection{Connection Lifetime and Scope}
The proposition covers full ECDHE handshakes with resumption and 0-RTT disabled. Requests and TLS KeyUpdate operations on an accepted connection inherit that connection's decision; they do not refresh attestation. Extending the result to resumption requires binding the resumed receiving capability to an accepted protected endpoint and defining its lifetime. Early data additionally requires a release rule and replay treatment before any fresh server share is available.

\endgroup

\subsection{Mutual Session Attestation}
\label{sec:mutual}

Confidential cloud services may span multiple protected service nodes, so an inter-service TLS connection can require both endpoints to verify that the peer runs in an accepted protected environment. TLSLatch extends session attestation to both sides of such a connection. The client-side observer registers the locally generated client share $X=g^x$ and produces evidence with
client-role domain separation. While the TLS handshake proceeds, the
client sends this evidence to the server. After verifying this binding, the server produces its own evidence over \(Y\), incorporating the verified client binding. The client
verifies this combined binding before opening its gate.

The two attestation operations overlap the original handshake and require no additional client-challenge round trip.

%The two attestation operations overlap the original handshake and require no additional client-challenge round trip. As in the server-only design, each endpoint must verify its peer before releasing sensitive data; mutual evidence alone does not enforce this ordering. An endpoint that may transmit protected data first therefore requires its own outbound gate.
%The two attestation operations can overlap the original handshake and
%do not require an additional client-challenge round trip. As in the
%server-only case, replaying a public share and its evidence does not
%supply the corresponding ephemeral receiving state. TLS Finished carries
%the authenticated key exchange into the accepted channel in each
%direction. Mutual evidence does not by itself determine when sensitive
%data may be released: an endpoint that sends protected data before
%verifying its peer requires the corresponding outbound gate.

\subsection{Delayed Batch Confirmation}
\label{sec:batch}

%TLSLatch-Batch is a distinct operating mode that trades the
%pre-delivery guarantee of the main protocol for amortized attestation.

Some users of established confidential-cloud providers may accept delayed verification in exchange for lower online overhead. TLSLatch-Batch targets this tradeoff by amortizing attestation across multiple connections.

Instead of gating each connection, both sides passively record
ServerHello observations. At a configured interval, the client submits
a batch identifier, a fresh challenge, and a canonical deduplicated list
of record digests. The server matches these entries against its own
observations, prevents an observation from being committed to multiple
batches, and attests a domain-separated digest of the resulting batch.
The client recomputes the digest and verifies the evidence.

Because Batch removes the send gate, requests may be released before confirmation completes. It therefore provides delayed confirmation that an attested program observed and checked the recorded connections, rather than the pre-delivery guarantee of TLSLatch. The configured interval controls attestation frequency and confirmation delay, while the underlying TLS sessions retain their normal key lifetimes.

%Because Batch removes the send gate, requests may already have been
%released when confirmation completes. TLSLatch-Batch provides delayed confirmation that an attested program observed and checked the recorded connections. The configured interval controls attestation frequency and confirmation delay. Batch therefore provides delayed confirmation that an attested
%program observed and checked the submitted connections, rather than the
%pre-delivery receiver guarantee established above. Its configurable
%interval controls attestation frequency and confirmation delay, not the
%lifetime of the underlying TLS keys.
% !TeX root = ../main.tex
\section{Implementation}
\label{sec:implementation}

We implement TLSLatch as user-space agents, with server observation and report generation in a Linux Hygon CSV CVM and clients on Linux, Windows, and macOS. Applications retain their existing TLS
libraries, certificates, and protocols. The implementation realizes the
two mechanisms of Section~\ref{sec:design}: associating an observed TLS
share with an active local endpoint, and temporarily withholding client
traffic until verification completes.

To improve deployability across platforms and system configurations, we realize the client-side gate using existing OS traffic-control interfaces: nftables/NFQUEUE on Linux, WinDivert on Windows, and Network Extension on macOS. The required primitives are available on Linux 3.14+, Windows 10/11 and Windows Server 2016+, and macOS 10.15+, respectively.
%The agents build on mature OS traffic-control interfaces: nftables/NFQUEUE on Linux, WinDivert on Windows, and Network Extension on macOS. The required primitives are available on Linux 3.14+, Windows 10/11 and Windows Server2016+, and macOS 10.15+, respectively.

%We deliberately build on existing packet- and flow-control facilities
%rather than requiring new kernel code. In particular, the Linux path
%does not depend on eBPF-specific hooks, and Windows uses WinDivert's
%existing driver instead of a custom WFP callout. These choices avoid
%additional kernel-version and driver-signing constraints. Because
%TLSLatch leaves the interception path after authentication, lower-level
%alternatives affect only connection setup; in our prototypes they did
%not materially improve end-to-end performance.

\paragraph{Server observation on Linux.}
The server passively observes ClientHello and locally emitted ServerHello messages through direction-filtered libpcap capture. Each server share \(Y\) is associated with the active local socket that emitted it and stored as a connection-scoped observation. The control interface can only reference an existing observation; it never accepts a caller-supplied \(Y\) for attestation. Before generating evidence, the agent revalidates the associated connection, ensuring that the reported share still belongs to the active local TLS endpoint. This realizes the provenance requirement of Section 4.1 without instrumenting the TLS library.

%The control request identifies the corresponding observation using a
%domain-separated hash of the client key-share encoding. The agent
%resolves this handle only against an existing local record and never
%accepts a caller-supplied $Y$ for attestation. Because the control
%request and ServerHello may arrive in either order, report generation
%waits until both are present and revalidates the associated connection
%before using the record. These checks realize the provenance contract
%of Section~\ref{sec:binding} without instrumenting the TLS library.

\paragraph{One-time client gate.}
The Linux client uses nftables and NFQUEUE for temporary
release control. ClientHello is allowed to proceed, while subsequent
outbound packets are retained pending the attestation decision. After
successful verification, queued packets are released and authorization
is committed to the kernel connection-tracking state. Packets belonging
to that connection then bypass NFQUEUE entirely, and the authorization
expires with the connection. Failed verification drops the pending
traffic and leaves the destination blocked. Thus user-space packet
processing is confined to connection establishment rather than the
application data path.

\paragraph{Windows and macOS}
%The other clients implement the same gate semantics with platform-native
%interfaces rather than reproducing the Linux packet path.
%Windows uses the WinDivert user-space API and its existing driver
%~\cite{windivertdocs}: packets are diverted and buffered while
%verification is pending, then reinjected before interception is removed
%on success. macOS uses a Network Extension content filter
%~\cite{applefilter}. The operating system provides a flow identifier and
%can suspend outbound flow data until a decision is available, so the
%implementation resumes or rejects the flow rather than reinjecting
%individual packets. Table~\ref{tab:platforms} summarizes these
%platform-specific substitutions. The listed TLS stacks are those used
%in our evaluation; TLSLatch does not call into their internal TLS APIs.
%Detailed parsing, race handling, and platform lifecycle code are
%included in the released implementation.
Windows and macOS realize the same one-time gate using the platform interfaces introduced above. Both operate outside the TLS library and remove user-space interception after authorization. Table~\ref{tab:platforms} summarizes the platform-specific mechanisms.

% !TeX root = ../main.tex
\begin{table*}[t]
	\caption{Platform-specific realizations of the same client-side gate.
		Server observation and report generation run in the Linux CVM.}
	\label{tab:platforms}
	\centering
	\setlength{\tabcolsep}{4pt}
	\renewcommand{\arraystretch}{1.08}
	
	\begin{tabularx}{\textwidth}{
			@{}
			>{\raggedright\arraybackslash}p{3.0cm}
			*{3}{>{\raggedright\arraybackslash}X}
			@{}}
		\toprule
		Mechanism & Linux & Windows & macOS \\
		\midrule
		
		Evaluated TLS stack
		& OpenSSL
		& Schannel
		& Network.framework \\
		
		Gate interface
		& nftables + NFQUEUE
		& WinDivert user-space API
		& Network Extension filter \\
		
		During verification
		& Queue outbound packets
		& Buffer diverted packets
		& Pause outbound flow data \\
		
		After success
		& Release; bypass via conntrack mark
		& Reinject; stop diversion
		& Resume flow \\
		
		Flow association
		& Local socket + connection tracking
		& TCP flow + initial sequence
		& OS flow identifier \\
		
		Failure
		& Drop pending traffic; block target
		& Keep diversion until isolation
		& Deny flow \\
		
		Batch capture
		& libpcap
		& Npcap
		& libpcap \\
		
		\bottomrule
	\end{tabularx}
	
	\par\vspace{2pt}
	\raggedright
	The TLS-stack row reports the applications used in evaluation rather
	than a TLSLatch dependency.
\end{table*}

\paragraph{Mutual attestation and Batch.}
Mutual attestation reuses the same observation and report-generation path at both Linux endpoints, with evidence generation triggered by local handshake observations so that it overlaps the TLS exchange. TLSLatch-Batch instead records handshakes using libpcap on Linux and macOS and Npcap on Windows. A periodic timer batches accumulated observations into a single report request, moving report generation and verification off the per-connection critical path.
% !TeX root = ../main.tex
\section{Performance Evaluation}
\label{sec:evaluation}
We evaluate TLSLatch across Linux, Windows, and macOS, focusing on
end-to-end latency, payload-path overhead, protection-component
resource cost and scalability, and the overhead of mutual attestation
and delayed Batch confirmation.

\subsection{Experimental Setup}

The server runs openEuler in a Hygon CSV CVM with 16 vCPUs and 32 GB of memory, with attestation reports generated by physical Hygon hardware. The Linux client runs in a second CSV CVM with 4 vCPUs and 4 GB of memory. The Windows client is a physical Windows 11 machine with 8 CPUs and 16 GB of memory, while the macOS client is a physical Apple M4 system with 16 GB of memory running macOS 26. Linux, Windows, and macOS use OpenSSL, Schannel, and Network.framework, respectively.

The Linux, Windows, and macOS client–server paths have mean RTTs of 0.257, 1.214, and 0.577 ms, respectively, with single-stream TCP goodput of 12.91 Gbps, 672.1 Mbps, and 925.9 Mbps. We therefore compare schemes within each client platform.

We compare Native TLS, TLSLatch, TLSLatch-Batch, TLS+RA, TNG in single-tunnel and nested-TLS modes, and CMaaS, following their public implementations while adapting attestation-dependent components to Hygon CSV. The evaluated TNG and CMaaS configurations run as separate user-space components outside the application TLS stack and are evaluated across all three client platforms; TLS+RA modifies the TLS stack and is therefore evaluated only on Linux, where we use OpenSSL. CMaaS is evaluated in its steady-state mode with an established application key, including its AES-GCM encryption and Base64 encoding.

Latency experiments establish fresh full TLS~1.3 connections with
resumption disabled and transfer 1~KB, 1~MB, 16~MB, or 64~MB.
Completion time spans connection initiation through receipt of the
server application-level acknowledgment. We report means after removing the five shortest and five longest
measurements from each block of 50 requests. For resource measurements, we use \emph{guard} to denote the scheme-specific protection components and required helpers measured separately from the business application.
%For payload and memory sizes, KB and MB denote $2^{10}$ and
%$2^{20}$ bytes, respectively.

\subsection{End-to-End Latency and Transfer Overhead}
\begin{figure*}[t]
\centering\includegraphics[width=\textwidth]{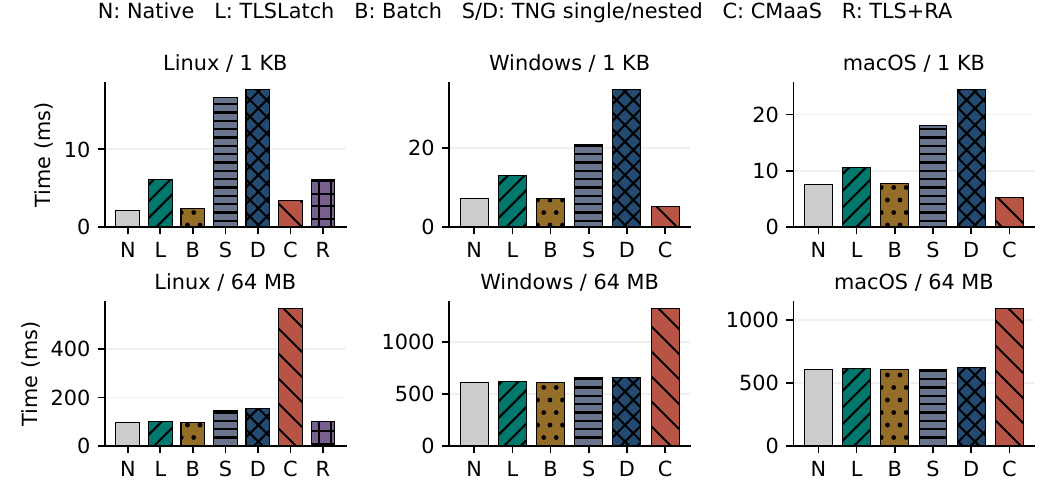}
\caption{Trimmed mean end-to-end completion time for
	setup-dominated 1~KB requests and 64~MB transfers.
	TLS+RA is evaluated on Linux. CMaaS uses an established
	application key; TLSLatch-Batch reports request completion
	before delayed confirmation.}
\Description{Six panels compare one KB and 64 MB request completion times across Linux, Windows and macOS.}
\label{fig:latency-linear}
\end{figure*}

%Figure~\ref{fig:latency-linear} shows that TLSLatch primarily
%adds a fixed authentication cost during connection setup. For
%1~KB requests, TLSLatch completes in 6.09, 13.04, and 10.53~ms
%on Linux, Windows, and macOS, respectively, adding 2.89--5.84~ms
%over native TLS. On Linux, TLSLatch and TLS+RA take 6.09 and
%6.10~ms, respectively, showing that transparent integration adds
%little latency beyond direct TLS integration. Compared with nested
%TNG, TLSLatch reduces completion time by 56.9--65.5\% across the
%three platforms. CMaaS is faster for these small requests because
%its steady-state path reuses an established application key; initial
%key establishment costs 5.78, 10.86, and 8.24~ms on Linux,
%Windows, and macOS. TLSLatch-Batch, which moves confirmation off
%the request critical path, remains within 0.24~ms of native TLS at
%1~KB on all three platforms.

Figure~\ref{fig:latency-linear} shows that TLSLatch primarily adds a fixed authentication cost during connection setup. For 1 KB requests, TLSLatch completes in 6.09, 13.04, and 10.53 ms on Linux, Windows, and macOS, respectively, adding 2.89–5.84 ms over Native TLS. On Linux, TLSLatch closely matches TLS+RA (6.09 vs. 6.10 ms), indicating that transparent integration adds little latency beyond direct TLS integration. Compared with nested TNG, TLSLatch reduces completion time by 56.9–65.5\% across the three platforms. CMaaS is faster for these small requests because its steady-state path reuses an established application key; initial key establishment takes 5.78–10.86 ms across platforms. TLSLatch-Batch remains within 0.24 ms of Native TLS at 1 KB on all three platforms.

As payload size grows, TLSLatch approaches Native TLS because the gate leaves the application data path after authentication. At 64 MB, completion time is only 4.6\%, 1.7\%, and 0.4\% above Native TLS on Linux, Windows, and macOS, respectively. CMaaS continues to encrypt and encode application payloads, and TLSLatch completes the same transfers 82.4\%, 53.0\%, and 43.8\% faster. Together with the 1 KB results, this scaling shows that TLSLatch introduces predominantly per-session authentication cost rather than payload-proportional processing.

%As payload size grows, TLSLatch approaches native TLS because the
%agent leaves the application data path after authentication. At
%64~MB, completion time is 99.49, 620.63, and 610.47~ms on Linux,
%Windows, and macOS, only 4.6\%, 1.7\%, and 0.4\% above native TLS.
%CMaaS continues to encrypt and encode application payloads, and
%TLSLatch completes the same transfers 82.4\%, 53.0\%, and 43.8\%
%faster, respectively. The short- and large-transfer results therefore
%show a predominantly per-session authentication cost rather than
%payload-proportional processing.

A Linux microbenchmark further localizes this fixed cost. CSV report
generation takes 3.369~ms on average, while evidence verification,
ClientHello registration, and gate authorization and release take
0.243, 0.195, and 0.114~ms, respectively. These operations may
overlap and therefore do not form an additive decomposition of
end-to-end latency.

\begin{figure}[t]
\centering\includegraphics[width=\columnwidth]{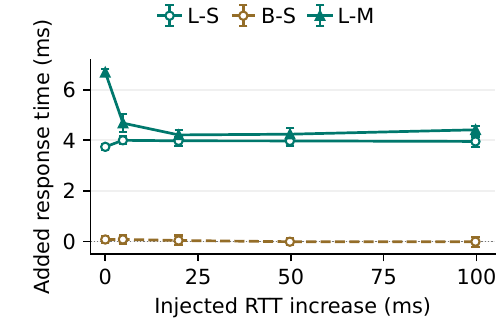}
\caption{Response-time increments over Native TLS.
L-S/L-M: server-only/mutual TLSLatch; B-S: server-only Batch.
Paired trimmed means with 95\% block-bootstrap intervals.
Injected RTT sums both one-way delays; Batch excludes later confirmation.}
\Description{Three curves compare server-only TLSLatch, server-only Batch, and mutual TLSLatch response-time increments as injected RTT increases.}
\label{fig:rtt-sensitivity}
\end{figure}

\paragraph{RTT sensitivity.}
Figure~\ref{fig:rtt-sensitivity} varies the network RTT for sequential 1 KB requests, applying the same delay to business and attestation traffic. With server-only attestation, TLSLatch adds 3.74–4.00 ms over Native TLS across measured RTTs of 0.312–100.105 ms. With mutual attestation, the increment is 6.69 ms at baseline but falls to 4.21–4.68 ms under nonzero injected delay and remains nearly constant thereafter. This behavior is consistent with client evidence generation overlapping handshake propagation rather than introducing an additional network round trip.

\subsection{Resource Cost and Scalability}
\begin{figure*}[t]
\centering\includegraphics[width=\textwidth]{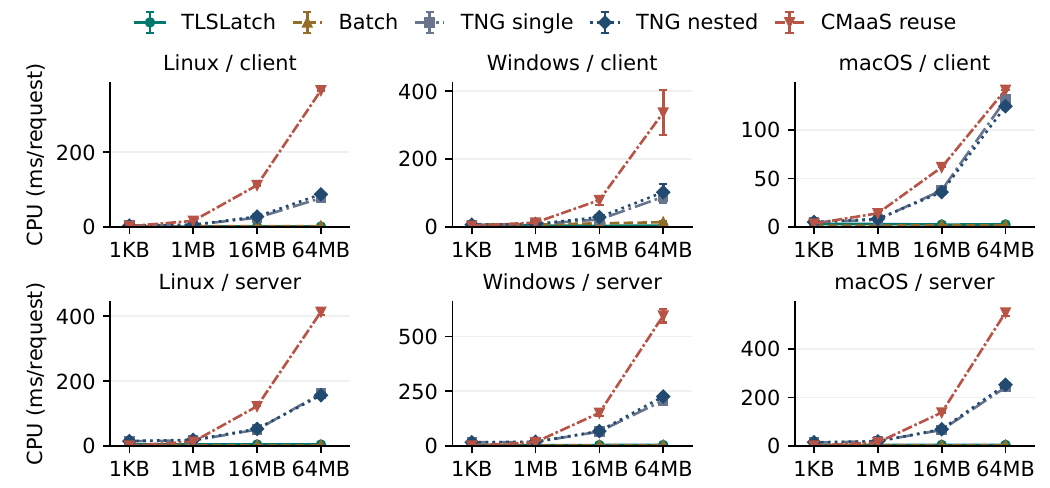}
\caption{Independent guard CPU per request, excluding the business application. Points average two resource batches; bars show their minimum and maximum. TLSLatch CPU remains largely independent of payload size. TLS+RA is omitted because its attestation code is embedded in the application.}
\Description{Six panels show client and server guard CPU across four upload sizes on each client platform.}
\label{fig:cpu}
\end{figure*}

Figure~\ref{fig:cpu} shows that TLSLatch’s guard CPU cost remains largely independent of payload size. Across the four payload sizes from 1 KB to 64 MB, client CPU ranges from 1.09–1.24 ms on Linux, 3.39–4.43 ms on Windows, and 2.48–3.09 ms on macOS, while the server guard remains at approximately 4.6–5.2 ms. In contrast,
mechanisms that remain on the payload path accumulate work as
transfers grow: at 64~MB, TNG-single consumes 77.27, 88.67, and
131.31~ms of client CPU across the three platforms, while CMaaS
consumes 365.06, 337.11, and 141.42~ms. This trend is consistent
with the end-to-end results in Figure~\ref{fig:latency-linear}.

\begin{table}[t]
\caption{Guard memory for 64 MB uploads (MB). C/S denotes client/server. Values are two-batch means of sampled component-set peaks, excluding business processes. Windows uses Working Set; Linux and macOS use RSS.}
\label{tab:resources}
\centering\setlength{\tabcolsep}{4pt}
\begin{tabular}{@{}llrrr@{}}\toprule
Scheme & & Linux & Windows & macOS \\ \midrule
\multirow{2}{*}{\sys{}} & C & 5.69 & 22.07 & 11.15 \\
 & S & 17.92 & 13.89 & 12.02 \\
\addlinespace[2pt]
\multirow{2}{*}{TLSLatch-Batch} & C & 18.77 & 29.05 & 14.78 \\
 & S & 19.78 & 19.78 & 21.24 \\
\addlinespace[2pt]
\multirow{2}{*}{TNG single} & C & 31.56 & 24.81 & 21.81 \\
 & S & 35.98 & 39.60 & 37.71 \\
\addlinespace[2pt]
\multirow{2}{*}{TNG nested} & C & 32.04 & 24.93 & 21.86 \\
 & S & 38.05 & 39.69 & 39.41 \\
\addlinespace[2pt]
\multirow{2}{*}{CMaaS reuse} & C & 700.67 & 566.27 & 440.26 \\
 & S & 398.38 & 398.38 & 398.72 \\
\bottomrule\end{tabular}
\end{table}

Table~\ref{tab:resources} reports the guard-memory footprint for 64 MB transfers. TLSLatch remains within 5.69–22.07 MB on the client side and 12.02–17.92 MB on the server, whereas the evaluated CMaaS implementation reaches 440.26–700.67 MB and about 398 MB, respectively, under its buffering and payload-transformation path.

% Replace the input path below with the current filename defining
% Figure~\ref{fig:scalability}, if different.
\begin{figure*}[t]
\centering\includegraphics[width=\textwidth]{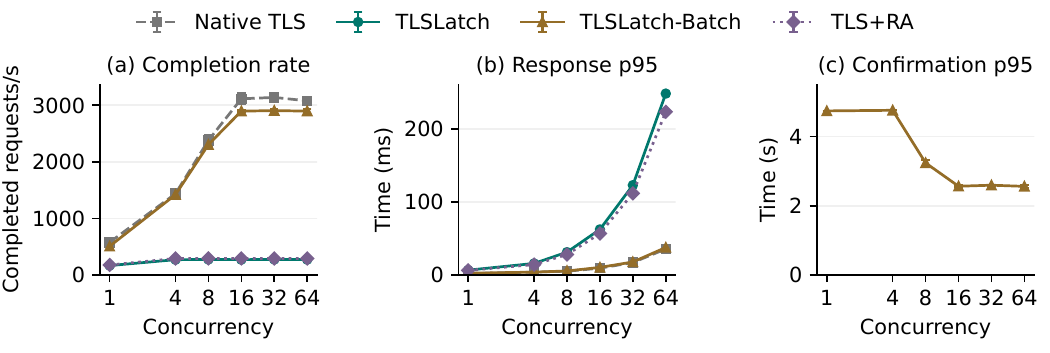}
\caption{Batch approaches native business throughput, while both
per-session attestation schemes saturate on their single-report paths.
(a) Mean business-completion throughput with two-run ranges.
(b) Response p95 pooled over 16,384 connections per configuration.
(c) Batch confirmation: mean and range of the two run-level p95 values,
including final flush. Concurrency results retain all samples;
Batch releases traffic before its later confirmation.}
\Description{Four schemes are compared on business throughput and
response p95; a third panel reports delayed Batch confirmation.}
\label{fig:auth-scalability}
\end{figure*}

%Figure~\ref{fig:auth-scalability} compares Native TLS, TLSLatch,
%TLSLatch-Batch, and TLS+RA under concurrent 1~KB requests on Linux.
%The four schemes use the same OpenSSL and application workload,
%with two runs of 8,192 measured connections per configuration.
%TLSLatch and TLS+RA each serialize report generation and
%plateau near 270 and 291 completed requests/s, respectively.
%At concurrency 64, TLSLatch reaches 271.34 requests/s,
%6.6\% below TLS+RA's 290.65 requests/s; their response p95 values
%are 248.17 and 223.19~ms. The shared plateau, together with the
%3.369~ms report-API measurement, identifies synchronous per-session
%report generation as the principal bottleneck in these implementations.

Figure~\ref{fig:auth-scalability} compares Native TLS, TLSLatch, TLSLatch-Batch, and TLS+RA under concurrent 1 KB requests on Linux. TLSLatch and TLS+RA both saturate on synchronous per-session report generation, plateauing near 270–290 requests/s. At concurrency 64, TLSLatch reaches 271.34 requests/s, 6.6\% below TLS+RA’s 290.65 requests/s, with response p95 of 248.17 and 223.19 ms, respectively. Together with the 3.369 ms report-generation microbenchmark, these results identify synchronous report generation as the dominant scalability bottleneck.

TLSLatch-Batch amortizes this cost across connections. At concurrency 64, it reaches 2,894.11 requests/s, or 94.1\% of Native TLS, with response p95 of 37.58 ms versus 35.55 ms. Figure~\ref{fig:auth-scalability}(c) separately reports delayed confirmation latency. Batch therefore approaches native business throughput for workloads that accept delayed verification, while TLSLatch retains pre-delivery verification.

\begin{table}[t]
\caption{Batch amortization and confirmation. Connections/report includes
warm-up and final batches. Delay is the mean time from the oldest
observation in a periodic batch to successful verification.}
\label{tab:batch-confirmation}
\centering\setlength{\tabcolsep}{4pt}
\begin{tabular}{@{}lrr@{}}\toprule
Client & Connections/report & Delay (s) \\ \midrule
Linux & 18.46 & 4.49 \\
Windows & 10.74 & 4.37 \\
macOS & 11.43 & 4.74 \\
\bottomrule\end{tabular}
\end{table}

Table~\ref{tab:batch-confirmation} quantifies the corresponding amortization–delay tradeoff. With a 5-s reporting interval, each report covers 10.7–18.5 connections on average across the three platforms, while the oldest observation in a batch is confirmed after 4.37–4.74 s. Together with the near-native request latency and throughput above, these results show that Batch shifts attestation cost and confirmation off the per-connection critical path at the expense of delayed verification.

\subsection{Mutual Attestation}

\begin{table}[t]
\caption{Mutual-attestation trimmed mean completion time (ms). Requests and responses each have the indicated size.}
\label{tab:mutual}
\centering\setlength{\tabcolsep}{4pt}
\begin{tabular}{@{}lrrrr@{}}\toprule
Scheme & 1 KB & 1 MB & 16 MB & 64 MB \\ \midrule
Native TLS & 2.19 & 6.44 & 54.52 & 193.56 \\
TLSLatch server-only & 6.10 & 11.13 & 62.83 & 200.42 \\
TLSLatch mutual & 8.88 & 13.94 & 65.74 & 207.93 \\
TLS+RA mutual & 10.33 & 15.46 & 62.09 & 206.05 \\
TNG mutual nested & 30.80 & 37.36 & 107.70 & 322.28 \\
\bottomrule\end{tabular}
\end{table}

Table~\ref{tab:mutual} evaluates mutual session attestation between two CSV CVMs. For 1 KB requests and responses, mutual TLSLatch completes in 8.88 ms, adding 2.78 ms over server-only TLSLatch. Its completion time is 14.0\% lower than mutual TLS+RA and 71.2\% lower than nested TNG. The added cost over server-only TLSLatch remains 2.78–2.91 ms through 16 MB; at 64 MB in each direction, mutual TLSLatch completes in 207.93 ms, only 3.7\% above server-only TLSLatch. This scaling indicates that mutual-attestation overhead is primarily per-connection rather than payload-proportional.

\FloatBarrier
% !TeX root = ../main.tex
\section{Conclusion}
Confidential cloud services must ensure that the endpoint receiving plaintext is the protected environment they intend to trust. We introduced ESR to characterize this problem and presented TLSLatch, a transparent session-attestation mechanism that binds the current TLS session to an attested receiving endpoint. Our cross-platform evaluation shows that this binding can be added transparently to existing TLS deployments with low overhead. More broadly, TLSLatch provides a general way to strengthen endpoint binding for TLS-based trusted and confidential computing services.

\FloatBarrier
\bibliographystyle{ACM-Reference-Format}
\bibliography{refs}
\end{document}